\renewcommand{\(}{\left(}
\renewcommand{\)}{\right)}
\newcommand{\ex}{\mathbb{E}}
\newcommand{\m}{\mathrm{m}}
\newcommand{\e}{\mathrm{\ell}}
\newcommand{\du}{\mathop{\mathrm{d}u}}
\newcommand{\po}{\uppi}
\newcommand{\F}{\bar{F}}
\newcommand{\q}{\mathbf{q}}
\newcommand{\I}{{\text{Int}}}
\newcommand{\D}{{\text{Dec}}}
\newcommand{\A}{{\text{Agg}}}
\newcommand{\ra}[1]{\renewcommand{\arraystretch}{#1}}
\newcommand{\breakcell}[2][l]{\begin{tabular}[t]{@{}#1@{}}#2\end{tabular}}
\begin{document}
\title{Measuring Market Performance with Stochastic Demand: Price of Anarchy and Price of Uncertainty}
\titlerunning{Realized Price of Anarchy \& Uncertainty}
\authorrunning{C. Melolidakis, S. Leonardos, C. Koki}
\author{Costis Melolidakis\inst{1} \and Stefanos Leonardos\inst{1}\Letter\and Constandina Koki\inst{2}}
\institute{National and Kapodistrian University of Athens, 157 84 Greece, \email{cmelol@math.uoa.gr, sleonardos@math.uoa.gr} \and Athens University of Economics and Business, 104 34 Greece, \email{kokiconst@aueb.gr}}

\maketitle
\begin{abstract}
Globally operating suppliers face the rising challenge of who\-lesale pricing under scarce data about retail demand, in contrast to better informed, locally operating retailers. At the same time, as local businesses proliferate, markets congest and retail competition increases. To capture these strategic considerations, we employ the classic Cournot model and extend it to a two-stage supply chain with an upstream supplier who operates under demand uncertainty and multiple downstream retailers who compete over quantity. The supplier's belief about retail demand is modeled via a continuous probability distribution function $F$. If $F$ has the \emph{decreasing generalized mean residual life} property, then the supplier's optimal pricing policy exists and is the unique fixed point of the \emph{mean residual life} function. We evaluate the realized \emph{Price of Uncertainty} and show that there exist demand levels for which market performs better when the supplier prices under demand uncertainty. In general, performance worsens for lower values of realized demand. We examine the effects of increasing competition on supply chain efficiency via the realized \emph{Price of Anarchy} and complement our findings with numerical results.
\keywords{Nash Equilibrium, Generalized Mean Residual Life, Continuous Beliefs, Price of Uncertainty, Price of Anarchy}
\end{abstract}

\section{Introduction}
The increasingly present trend of geographically distributed markets affects supply chain performance in unexpected ways. Internationally operating suppliers procure retailers via internet platforms or intricate networks with information latency. Consumer data  that is easily accessible to the retailers due to their proximity to the market, may often not be available to their international suppliers. Concurrently, and aided by new technologies, local retail businesses are sprouting at a rapid pace. These trends give rise to new information and competition structures between downstream members (retailers) and their upstream contemporaries (suppliers) in modern supply chains.\par
The questions that rise in this changing environment, mainly concern the issues of market efficiency. How does the market perform when the supplier prices without knowing the retailers willingness-to-pay for his product? Do competing retailers have incentives to reveal private information to the supplier that they may have about retail demand? To capture these considerations and study this emerging phenomenon, in \cite{Le17}, we employ the classic Cournot model of competition and extend it to the following two-stage game: in the first-stage (acting as a Stackelberg leader), a revenue-maximizing supplier sets the wholesale price of a product under incomplete information about market demand. Demand or equivalently, the supplier's belief about it, is modeled via a continuous probability distribution. In the second-stage, the competing retailers observe wholesale price and realized market demand and engage in a classic Cournot competition. Retail price is determined by an affine inverse demand function. \par
Classic models, see e.g., \cite{He13}, \cite{La01}, \cite{Lu13}, \cite{Ma18}, study market efficiency when demand is realized after the strategic decisions of all supply chain members -- wholesale pricing and retailers' orders. In contrast, performance of markets in which uncertainty is resolved at an intermediate stage, has not been yet properly understood. \vspace*{-0.2cm}

\subsubsection{Contributions -- Outline:}
Based on the equilibrium analysis in \cite{Le17}, the present paper aims to fill this gap by following the methodology of \cite{La01}. To measure the effects of demand uncertainty and second-stage competition on market performance and efficiency, we modify the tools of Price of Anarchy, as defined in \cite{Pe07} and Price of Uncertainty, c.f. \cite{Bal13}, to account for \emph{realized values} of demand. In \Cref{model}, we provide the model description and in \Cref{existing}, the existing results from \cite{Le17} on which the current analysis is based. Our findings, both analytical and numerical are presented in \Cref{efficiency} and summarized in \Cref{conclusions}.

\section{The Model}\label{model}
An upstream supplier (or manufacturer) produces a single homogeneous good at constant marginal cost, normalized to $0$, and sells it to a set of $N=\{1,2,\dots, n\}$ downstream retailers. The supplier has ample quantity to cover any possible demand and his only decision variable is the wholesale price $r$. The retailers observe $r$ and the market demand $\alpha$ and choose simultaneously and independently their order-quantities $q_i\(r\mid \alpha\), i\in N$. They face no uncertainty about the demand and the quantity that they order from the supplier is equal to the quantity that they sell to the market (in equilibrium). The retail price is determined by an affine demand function $\label{demand}p=\(\alpha-q\(r\)\)^+$, where $\alpha$ is the \emph{demand parameter} or \emph{demand level} and $q\(r\):=\sum_{i=1}^nq_i\(r\)$. Contrary to the retailers, we assume that at the point of his decision, the supplier has incomplete information about the actual market demand. \vspace*{-0.2cm}

\subsubsection{Game-Theoretic Formulation:} This supply chain can be represented as a two-stage game, in which the supplier acts in the first and the retailers in the second stage. A strategy for the supplier is a price $r\ge 0$ and a strategy for retailer $i$ is a function $q_i:\mathbb R_+\to \mathbb R_+$, which specifies the quantity that retailer $i$ will order for any possible cost $r$. Payoffs are determined via the strategy profile $\(r,\q\(r\)\)$, where $\q\(r\)=\(q_i\(r\)\)_{i=1}^n$. Given cost $r$, the profit function $\po_i\(\q\(r\)\mid r\)$ or simply $\po_i\(\q\mid r\)$, of retailer $i\in N$, is $\po_i\(\q\mid r\)= q_i\(\alpha-q\)^+-rq_i$. For a given value of $\alpha$, the supplier's profit function, $\po_s$ is $\po_s\(r\mid \alpha\)=rq\(r\)$ for $0\le r<\alpha$, where $q\(r\)$ depends on $\alpha$ via $\po_i\(\q\mid r\)$. \vspace*{-0.2cm}

\subsubsection{Continuous Beliefs:} To model the supplier's uncertainty about retail demand, we assume that after his pricing decision, but prior to the order-decisions of the retailers, a value for $\alpha$ is realized from a continuous distribution $F$, with finite mean $\ex\alpha <+\infty$ and nonnegative values, i.e. $F\(0\)=0$. Equivalently, $F$ can be thought of as the supplier's belief about the demand parameter and, hence, about the retailers' willingness-to-pay his price. We will use the notation $\F:=1-F$ for the survival function and $\alpha_L:=\sup{\{r\ge0: F\(r\)=0\}}\ge 0$, $\alpha_H:=\inf{\{r\ge0: F\(r\)=1\}}\le +\infty$ for the support of $F$ respectively. The instance $\alpha_L=\alpha_H$ corresponds to the reference case of deterministic demand. In any other case, i.e., for $\alpha_L<\alpha_H$, the supplier's payoff function $\po_s$ becomes stochastic: $\po_s\(r\)=\ex \po_s\(r\mid \alpha\)$. All of the above are assumed to be common knowledge among the participants in the market (the supplier and the retailers). 

\section{Existing Results}\label{existing}
We consider only subgame perfect equilibria, i.e. strategy profiles $\(r,\q\(r\)\)$ such that $\q\(r\)$ is an equilibrium in the second stage and $q_i\(r\)$ is a best response against any $r$. The equilibrium behavior of this market has been analyzed in \cite{Le17}. In the reference case of deterministic demand, i.e., for $\alpha_L=\alpha_H$, each retailer $i=1,2,\dots, n$ orders quantity $q_i^*\(r\mid \alpha\)=\frac1{n+1}\(\alpha-r\)^+$. Hence, the supplier's payoff on the equilibrium path becomes $\po_s\(r\mid \alpha\)=rq^*\(r\mid \alpha\)=\frac{n}{n+1}r\(\alpha-r\)^+$, for $0\le r$. Maximization of $\po_s$ with respect to $r$ yields that the complete information two-stage game has a unique subgame perfect Nash equilibrium, under which the supplier sells with optimal price $r^*\(\alpha\)=\frac12\alpha$ and each of the retailers orders quantity $q_i^*\(r\)= \frac{1}{n+1}\(\alpha-r\)^+$, $i=1,2,\dots,n$. To proceed with the equilibrium representation in the stochastic case, we first introduce some notation. \vspace*{-0.2cm}

\subsubsection{Generalized mean residual life:}Let $\alpha\sim F$ be a nonnegative random variable with finite expectation $\ex \alpha <+\infty$. The \emph{mean residual life (MRL)} function $\m\(r\)$ of $\alpha$ is defined as
\[\m\(r\):=\ex\(\alpha-r \mid \alpha >r\)=\,\dfrac{1}{\F\(r\)}\int_{r}^{\infty}\F\(u\)\du, \quad\mbox{for } r< \alpha_H \] and $\m\(r\):=0$, otherwise, see, e.g., \cite{Be16}. In \cite{Le17}, we introduced the \emph{generalized mean residual life (GMRL)} function $\e\(r\)$, defined as $\e\(r\):=\frac{\m\(r\)}r$, for $0<r<\alpha_H$, in analogy to the \emph{generalized failure rate (GFR)} function $\mathrm{g}\(r\):=r\mathrm{h}\(r\)$, where $\mathrm{h}\(r\):=f\(r\)/\F\(r\)$ denotes the hazard rate of $F$ and the \emph{increasing generalized failure rate (IGFR)} unimodality condition, defined in \cite{La01} and studied in \cite{La06},\cite{Ba13}. If $\e\(r\)$ is \emph{decreasing}, then $F$ has the \emph{(DGMRL) property}. The relationship between the (IGFR) and (DGMRL) classes of random variables is studied in \cite{Le17}. \vspace*{-0.37cm}

\subsubsection{Market equilibrium:} Using this terminology, we can express the supplier's optimal pricing strategy in terms of the MRL function and formulate sufficient conditions on the demand distribution, under which a subgame perfect equilibrium exists and is unique. 
\begin{theorem}[\cite{Le17}]\label{mainresult}
Assume that the supplier's belief about the unknown, nonnegative demand parameter, $\alpha$, is represented by a continuous distribution $F$, with support inbetween $\alpha_L$ and $\alpha_H$ with $0\le \alpha_L<\alpha_H\le\infty$. 
\begin{itemize}
\item[(a)] If an optimal price $r^*$ for the supplier exists, then $r^*$ satisfies the fixed point equation  
\begin{equation}\label{fixed}r^*=\m\(r^*\)\end{equation} 
\item[(b)] If $F$ is strictly DGMRL and $\ex \alpha^2$ is finite, then in equilibrium, the optimal price $r^*$ of the supplier exists and is the unique solution of \eqref{fixed}. 
\end{itemize}
\end{theorem}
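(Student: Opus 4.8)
The plan is to collapse the two-stage game to a one-dimensional maximization of the supplier's expected profit and to read off the first-order condition through the MRL function. First I would use that the retailers observe both $r$ and the realized $\alpha$ before ordering, so their second-stage equilibrium coincides with the deterministic-demand equilibrium recalled in \Cref{existing}; hence the realized total order is $q^*\(r\mid\alpha\)=\tfrac{n}{n+1}\(\alpha-r\)^+$ and, taking expectations over $\alpha\sim F$,
\[\po_s\(r\)=\ex\,\po_s\(r\mid\alpha\)=\tfrac{n}{n+1}\,r\,\ex\(\alpha-r\)^+=\tfrac{n}{n+1}\,r\int_r^\infty\F\(u\)\du,\]
where the last step uses the standard identity $\ex\(\alpha-r\)^+=\int_r^\infty\F\(u\)\du$ for nonnegative $\alpha$. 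By the definition of $\m$, this equals $\tfrac{n}{n+1}\,r\,\F\(r\)\m\(r\)$ on $\(0,\alpha_H\)$.

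For part (a), I would note that $\po_s$ is continuous on $[0,\infty)$ and differentiable on $\(0,\alpha_H\)$: since $F$ is continuous, $\F$ is continuous and Leibniz's rule gives $\tfrac{\mathrm d}{\mathrm dr}\int_r^\infty\F\(u\)\du=-\F\(r\)$. Differentiating,
\[\po_s'\(r\)=\tfrac{n}{n+1}\Bigl(\int_r^\infty\F\(u\)\du-r\F\(r\)\Bigr)=\tfrac{n}{n+1}\,\F\(r\)\bigl(\m\(r\)-r\bigr).\]
Because $\po_s\(0\)=0$, $\po_s>0$ on $\(0,\alpha_H\)$ and $\po_s\equiv0$ on $[\alpha_H,\infty)$, any optimal $r^*$ must lie where $\F\(r^*\)>0$; the interior stationarity condition $\po_s'\(r^*\)=0$ then forces $\m\(r^*\)=r^*$, which is \eqref{fixed}.

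For existence in part (b), I would argue that $\po_s$ is continuous, nonnegative, vanishes at $r=0$, is strictly positive somewhere, and tends to $0$ as $r\to\infty$; the supremum is therefore attained at an interior point, which by part (a) solves \eqref{fixed}. This is the single place where the second-moment hypothesis is needed: the bound $r\(\alpha-r\)^+\le\alpha^2\mathbf 1\{\alpha>r\}$ yields $r\,\ex\(\alpha-r\)^+\le\ex\bigl[\alpha^2\mathbf 1\{\alpha>r\}\bigr]\to0$ by dominated convergence whenever $\ex\alpha^2<\infty$, giving $\po_s\(r\)\to0$. For uniqueness I would recast \eqref{fixed} as $\e\(r\)=\m\(r\)/r=1$; every stationary point solves this, and strict DGMRL makes $\e$ strictly decreasing on $\(0,\alpha_H\)$, so $\e\(r\)=1$ has at most one root. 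Hence $\po_s$ has exactly one stationary point, which---being the unique interior critical point of a continuous map that vanishes at both ends of its effective support---is the global maximizer and the unique solution of \eqref{fixed}.

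I expect the main obstacle to be the existence step rather than the uniqueness step: controlling the tail of $\po_s$ so that the supremum is actually attained is precisely what the finiteness of $\ex\alpha^2$ buys, since heavy-tailed beliefs could otherwise push the optimal price off to infinity. By contrast, the clean monotonicity of $\e$ renders uniqueness essentially immediate once the fixed-point equation is rewritten as $\e\(r\)=1$. A secondary technical point worth stating carefully is the justification of differentiation under the integral sign and of the interior-optimum argument, both of which rest only on the assumed continuity of $F$.
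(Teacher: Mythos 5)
Your proof is correct and takes essentially the same route as the source the paper relies on: this theorem is imported from \cite{Le17} without a reproduced proof, and your argument---reducing the two-stage game to maximizing $\tfrac{n}{n+1}r\,\ex\(\alpha-r\)^+=\tfrac{n}{n+1}r\F\(r\)\m\(r\)$, extracting the first-order condition $\m\(r^*\)=r^*$, using $\ex\alpha^2<\infty$ to kill the tail so the supremum is attained, and invoking strict monotonicity of $\e$ for uniqueness of the fixed point---is precisely the standard argument there (and is the same one the paper reuses in \Cref{secpoa} for the integrated firm). No gaps: the differentiation step needs only continuity of $\F$, and your observation that existence, not uniqueness, is where the second-moment hypothesis enters is exactly right.
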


\section{Supply Chain Efficiency}\label{efficiency}
To study the degree in which demand uncertainty affects the realized market profits, we fix a realized demand level $\alpha$ and compare the individual realized profits of the supplier and each retailer between the scenario in which the supplier prices before demand realization and the scenario in which the supplier prices after demand realization. For clarity, the results are summarized in \Cref{fund}.
\begin{center}
\ra{1}
\begin{tabular}{@{}lllll@{}}\toprule
&$\phantom{b}$& \multicolumn{3}{c}{Upstream Demand for Supplier} \\
\cmidrule{3-5}
&&\multicolumn{1}{l}{Uncertain $\alpha\sim F$} &$\phantom{abds}$& \multicolumn{1}{l}{Deterministic $\alpha$} \\
\midrule
\breakcell[l]{Equilibrium \\Wholesale Price} && \breakcell[l]{\\[-0.2cm]\multicolumn{1}{c}{$r^*=\m_F\(r^*\)$}} && \breakcell[l]{\\[-0.2cm]\multicolumn{1}{c}{$r^*=\alpha/2$}}\\
\midrule
&& \multicolumn{3}{c}{Realized Profits in Equilibrium}\\ \midrule
Supplier && $\Pi_s^U=\frac{n}{n+1}r^*\(\alpha-r^*\)^+$ && $\Pi_s^D=\frac{n}{n+1}\(\alpha/2\)^2$\\
Retailer $i$ && $\Pi_i^U=\frac{1}{\(n+1\)^2}\(\(\alpha-r^*\)^+\)^2$ && $\Pi_i^D=\frac{n}{\(n+1\)^2}\(\alpha/2\)^2$\\[0.3cm]
Aggregate && $\Pi_\A^U=\Pi_s^U+\sum_{i=1}^n \Pi_i^U$ && $\Pi_\A^D=\Pi_s^U+\sum_{i=1}^n \Pi_i^U$\\ 
\bottomrule
\end{tabular}
\captionof{table}{Wholesale price in equilibrium and realized profits when the supplier prices under demand uncertainty (left column) and under deterministic demand (right column).}
\label{fund}
\end{center}

\subsection{Price of Uncertainty}\label{PoU}
By \Cref{fund}, for each retailer, we have that $\frac{1}{\(n+1\)^2}\(\(\alpha-r^*\)^+\)^2\ge \frac{1}{\(n+1\)^2}\(\frac\alpha2\)^2$ for all values of $\alpha\ge2r^*$. This implies that for larger values of the realized demand, the retailers are better off if the supplier prices under demand uncertainty. In contrast, the supplier is never better off when he prices under demand uncertainty, as is intuitively expected. Indeed $\frac{n}{n+1}r^*\(\alpha-r^*\)^+\le \frac{n}{n+1}\(\alpha/2\)^2$ for all values of $\alpha$,
\begin{wrapfigure}[14]{r}{0.49\textwidth}
\includegraphics[width=\linewidth]{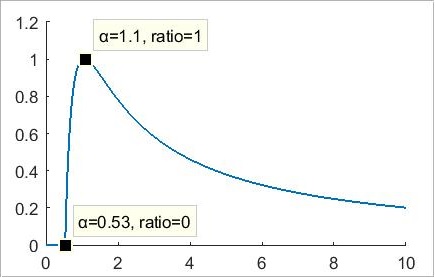}
\caption{\label{supratio}Ratio of the supplier's realized profits with and without demand uncertainty for $\alpha\sim\text{Weibull}\(1,2\)$.}
\end{wrapfigure}
with equality if and only if $\alpha=2r^*$. The ratio of the supplier's realized profit in the scenario with demand uncertainty to the scenario without demand uncertainty is equal to $4\cdot\frac{r^*}{\alpha}\(1-\frac{r^*}{\alpha}\)$ 
and hence it has the shape shown in \Cref{supratio}, independently of the underlying demand distribution.\par
Similar findings are obtained when we compare the market's \textit{aggregate} realized pro\-fits (supplier and retailers) between these two scenarios. This is accomplished via the ratio of aggregate realized market profits under stochastic demand to the aggregate realized market profits under deterministic demand, which we term the realized \emph{Price of Uncertainty} (PoU), motivated by a similar notion that is studied in \cite{Bal13}. Specifically, 
\[\text{PoU}:=\sup_{F\in \mathcal G}\sup_{\alpha}{\left\{\frac{\Pi_{\A}^U}{\Pi_{\A}^D}\right\}}=\sup_{F\in \mathcal G}\sup_{\alpha}{\left\{\frac{\Pi_{s}^U+\sum_{i=1}^n\Pi_i^U}{\Pi_s^D+\sum_{i=1}^n\Pi_i^D}\right\}}\]
in which we restrict attention to the class $\mathcal G$ of nonnegative DGMRL random variables to retain equilibrium uniqueness. Intuitively, one expects the system to perform worse under demand uncertainty which translates to PoU being bounded above by $1$. However, this is not the case as the next Theorem states.
\begin{theorem}\label{thmpou}The realized PoU of the stochastic market is given by $\text{PoU}=1+\mathcal O\(n^{-2}\)$, independently of the underlying demand distribution. The upper bound is attained for realized demand $\alpha^*=\frac{n}{n-1}\cdot2r^*$.
\end{theorem}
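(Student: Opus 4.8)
The plan is to collapse the double supremum over $\(F,\alpha\)$ into a one--dimensional optimization in the scale--free variable $t:=r^*/\alpha$. First I would substitute the closed forms of \Cref{fund} into $\Pi_\A^U/\Pi_\A^D$ and restrict to the equilibrium regime $\alpha>r^*$, on which every positive part is inactive. Both aggregates are homogeneous of degree two in $\alpha$, so dividing through by $\alpha^2$ and writing $t=r^*/\alpha\in\(0,1\)$ makes the ratio depend on $F$ and $\alpha$ only through $t$. Factoring the numerator as $\frac{n}{n+1}t\(1-t\)+\frac{n}{\(n+1\)^2}\(1-t\)^2=\frac{n}{\(n+1\)^2}\(1-t\)\(nt+1\)$ and reducing the denominator to $\frac{n\(n+2\)}{4\(n+1\)^2}$, the whole ratio simplifies to the single rational function
\[\frac{\Pi_\A^U}{\Pi_\A^D}=\frac{4\,\(1-t\)\(nt+1\)}{n+2}.\]

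The decisive observation is that this expression no longer mentions the distribution. For every admissible $F$ the fixed point $r^*=\m\(r^*\)$ of \Cref{mainresult} is a fixed positive number, and as the realized demand $\alpha$ ranges over $\(r^*,\infty\)$ the variable $t$ sweeps all of $\(0,1\)$; hence the inner $\sup_\alpha$ already equals $\max_{t\in\(0,1\)}$ of the displayed function and the outer $\sup_F$ contributes nothing, which is exactly the claimed independence from the demand distribution. I would then maximize the downward parabola $\(1-t\)\(nt+1\)=-nt^2+\(n-1\)t+1$: its vertex $t^*=\frac{n-1}{2n}$ lies in $\(0,1\)$ for every $n\ge2$ and yields the maximal value $\frac{\(n+1\)^2}{4n}$. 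Substituting back gives $\text{PoU}=\frac{\(n+1\)^2}{n\(n+2\)}=1+\frac{1}{n\(n+2\)}=1+\mathcal O\(n^{-2}\)$, and inverting $t^*=r^*/\alpha^*$ recovers the optimizing demand $\alpha^*=\frac{2n}{n-1}\,r^*=\frac{n}{n-1}\cdot2r^*$.

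The part that needs care is the reduction and its admissibility, not the final calculus. I would verify that the critical point lies inside the valid regime, i.e.\ that $\alpha^*=\frac{2n}{n-1}r^*>r^*$; this holds because $\frac{2n}{n-1}>1$ for $n\ge2$, so the positive parts are genuinely inactive at the optimum while the endpoints $t\to0^+$ and $t\to1^-$ both return strictly smaller values, certifying an interior global maximum. A secondary point is attainment of the supremum: since one may restrict to any DGMRL law whose support reaches $\alpha^*$ (for instance one with $\alpha_H=\infty$, as in \Cref{supratio}), the maximizing $\alpha^*$ is a legitimate realized value and the bound is achieved rather than merely approached. The degenerate case $n=1$ is excluded---a single retailer forces $t^*=0$ and the statement is vacuous---but it falls outside the competitive regime of interest.
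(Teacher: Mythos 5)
Your proposal is correct and takes essentially the same route as the paper's proof: both substitute the equilibrium profits from Table~\ref{fund} and reduce the double supremum to maximizing the distribution-free expression $\frac{4}{n+2}(1-t)(nt+1)$ with $t=r^*/\alpha$, which the paper handles by differentiating the ratio in $\alpha$ and you by locating the vertex of the quadratic in $t$ --- a purely cosmetic difference. Both arguments land on the same maximizer $\alpha^*=\frac{2n}{n-1}r^*$ and value $1+\frac{1}{n(n+2)}$; your extra remarks on attainment (picking $F$ with $\alpha_H=\infty$ so that $\alpha^*$ is a legitimate realized demand) and on excluding $n=1$ are minor refinements that the paper leaves implicit.
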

\begin{proof}
By \Cref{fund}, a direct substitution yields that the inner ratio is equal to 
\[\frac{\Pi_s^U+\sum_{i=1}^n\Pi_i^U}{\Pi_s^D+\sum_{i=1}^n\Pi_i^D}=\frac{\frac{n}{n+1}r^*\(\alpha-r^*\)^++n\(\frac{1}{n+1}\(\alpha-r^*\)^+\)^2}{\frac{n}{n+1}\(\frac\alpha2\)^2+\frac{n}{\(n+1\)^2}\(\frac\alpha2\)^2}\]
Hence, $\text{PoU}=\sup_{F\in \mathcal G}\sup_{\alpha}{\left\{\frac{4}{\(n+2\)\alpha^2}\(\alpha-r^*\)^+\(\alpha+nr^*\)\right\}}$. For realized demand $\alpha<r^*$, there is a stockout and the market operates worst under demand uncertainty. However, for realized demand values $\alpha>r^*$, the aggregate market profits of the supplier and the retailers may be larger if the supplier prices under demand uncertainty. To see this, we take the partial derivative of the previous ratio with respect to $\alpha$
\[\frac{\partial}{\partial\alpha}\(\frac{4}{\(n+2\)\alpha^2}\(\alpha-r^*\)^+\(\alpha+nr^*\)\)=\frac{4r^*}{\(n+2\)\alpha^3}\(2nr^*-\alpha\(n-1\)\)\]
which shows that the ratio is increasing on $[r^*,\frac{2n}{n-1}r^*)$, and decreasing thereafter. The ratio is maximized for $\alpha=\frac{2n}{n-1}r^*$, yielding a value of $1+\frac{1}{n^2+2n}$, which does not depend on the underlying distribution $F$ and which is larger than $1$ for any number $n$ of competing second-stage retailers. \qed
\end{proof}
The values for which the ratio exeeds $1$, depend on $n$. Specifically, for $n\ge 3$, we have that $\frac{4}{\(n+2\)\alpha^2}\(\alpha-r^*\)^+\(\alpha+nr^*\)\ge1$ for values of $\alpha$ in $[2r^*, \frac{2n}{n-2}r^*]$. In this case, the upper bound decreases to $2r^*$ as $n\to \infty$. For $n=2$, the upper bound is equal to infinity, i.e., the range of $\alpha$ for which the ratio exceeds $1$ is equal to $[2r^*,+\infty)$. In all cases, the lower bound is independent of $n$. Finally, by taking the partial derivative with respect to $n$, we find that the PoU is nondecreasing in $n$ for realized values of $\alpha$ in $[r^*,2r^*]$ and decreasing in $n$ thereafter, again independently of the underlying demand distribution. This is illustrated in \Cref{ncompare}. 
\begin{figure}[!htp]
\includegraphics[width=\linewidth]{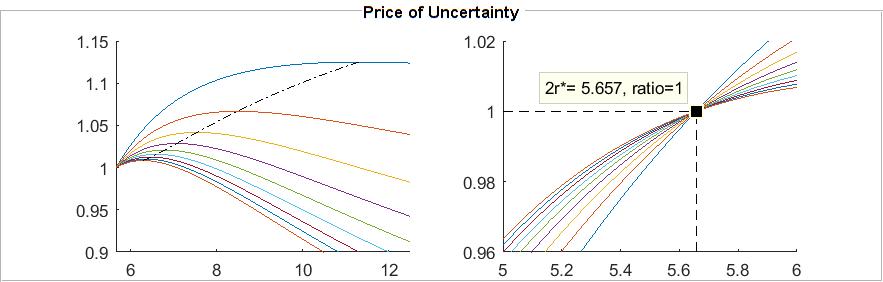}
\caption{Ratio of the aggregate market realized profits with and without demand uncertainty for $n=2$ to $n=10$ with $\alpha\sim \text{Gamma}\(2,2\)$. The dashed curve in the left panel passes through the points $\alpha^*=\frac{n}{n-1}\cdot2r^*$ on which the PoU is attained for each $n$. The curves are decreasing in $n$, i.e. the highest curve corresponds to $n=2$ and the lowest to $n=10$. The right panel shows the behavior of the curves in a neighborhood of their intersection point, $2r^*\approx 5.657$. Prior to the intersection, the ratio is increasing in $n$, whereas after the intersection the ratio is decreasing in $n$. }
\label{ncompare}
\end{figure}\vspace*{-0.7cm}
\subsection{Price of Anarchy}\label{secpoa} As a benchmark, we will first determine the equilibrium behavior and performance of an integrated supply chain. The integrated firms' decision variable is now the retail price $r$, and hence its expected profit $\po_\I$ is given by $\po_\I\(r\)=r\ex\(\alpha-r\)^+=r\m\(r\)\F\(r\)$. By the same argument as in the proof of \Cref{mainresult}, $\po_\I$ is maximized at $r^*=\m\(r^*\)$. In particular, the equilibrium price of both the integrated and non-integrated supplier is the same. Hence, the integrated firm's realized profit in equilibrium is equal to $\Pi_\I^U\(r^*\mid \alpha\)=r^*\(\alpha-r^*\)^+$. \par
In a similar fashio to \cite{Pe07}, we define the realized \emph{Price of Anarchy (PoA)} of the system as the worst-case ratio of the realized profit of the centralized supply chain, $\Pi^U_\I$, to the realized aggregate profit of the decentralized supply chain,
\begin{center}
\ra{1}
\begin{tabular}{@{}lllll@{}}\toprule
&$\phantom{b}$& \multicolumn{3}{c}{Realized Profits in Equilibrium} \\
\cmidrule{3-5}
&&\multicolumn{1}{l}{Uncertain Demand $\alpha\sim F$} &$\phantom{abds}$& \multicolumn{1}{l}{Deterministic Demand $\alpha$} \\
\midrule
Integrated Firm && $\Pi^U_\I=r^*\(\alpha-r^*\)^+$ && $\Pi^D_\I=\(\alpha/2\)^2$\\
\bottomrule
\end{tabular}
\captionof{table}{Realized profits for the integrated firm under the two scenarions. The equilibrium wholesale prices remain the same as in the decentralized market, cf. \Cref{fund}.}
\label{integrated}
\end{center}
$\Pi^U_\D:=\Pi^U_s+\sum_{i=1}^n \Pi^U_i$. Again, to retain equilibrium uniqueness, we restrict attention to the class $\mathcal G$ of nonnegative DGMRL random variables. 
If the realized demand $\alpha$ is less than $r^*$, then both the centralized and decentralized chains make $0$ profits. Hence, we define the PoA as: $\text{PoA}:=\sup_{F\in \mathcal G}\sup_{\alpha>r^*} {\left\{\frac{\Pi^U_\I}{\Pi^U_\D}\right\}}$. We then have
\begin{theorem}\label{poathm} The realized PoA of the stochastic market is given by $\text{PoA}=1+1/n$ independently of the underlying demand distribution. The upper bound is asymptotically attained for $\alpha \searrow r^*$.
\end{theorem}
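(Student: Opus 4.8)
The plan is to substitute the closed-form equilibrium profits from \Cref{fund} and \Cref{integrated} into the ratio $\Pi^U_\I/\Pi^U_\D$, restrict to the regime $\alpha>r^*$ in which every $\(\alpha-r^*\)^+$ equals $\alpha-r^*$, and reduce the resulting two-variable expression to a single monotone function whose supremum can be read off directly.

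First I would write $\Pi^U_\I=r^*\(\alpha-r^*\)$ and $\Pi^U_\D=\frac{n}{n+1}r^*\(\alpha-r^*\)+\frac{n}{\(n+1\)^2}\(\alpha-r^*\)^2$, factor the common $\(\alpha-r^*\)$ from numerator and denominator, and pull the coefficient $\frac{n}{\(n+1\)^2}$ out of the denominator. After cancellation the ratio collapses to
\[\frac{\Pi^U_\I}{\Pi^U_\D}=\frac{\(n+1\)^2 r^*}{n\(\(n+1\)r^*+\(\alpha-r^*\)\)}.\]
The key structural observation is that $F$ enters this expression only through $r^*$, and that $r^*$ appears purely as a scale: setting $t:=\alpha/r^*>1$ rewrites the ratio as $\frac{\(n+1\)^2}{n\(n+t\)}$, a function of $t$ and $n$ alone. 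This is exactly why the final answer will turn out to be independent of the underlying demand distribution.

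Next I would establish monotonicity. Since the denominator $n\(n+t\)$ is strictly increasing in $t$, the ratio is strictly decreasing in $\alpha$ on $\(r^*,\infty\)$, so its supremum over $\alpha>r^*$ is the right-hand limit at $\alpha\searrow r^*$, i.e.\ $t\searrow 1$. Evaluating this limit gives $\frac{\(n+1\)^2}{n\(n+1\)}=\frac{n+1}{n}=1+1/n$. Because the limiting value carries no trace of $r^*$ or of $F$, the outer supremum over $F\in\mathcal G$ is immediate and returns the same constant, establishing $\text{PoA}=1+1/n$.

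I do not anticipate a serious computational obstacle; the only point that genuinely needs care — and where this argument differs from the proof of \Cref{thmpou} — is that the extremizing demand sits at the \emph{boundary} of the feasible region rather than at an interior critical point. The ratio admits no interior maximizer, so the bound is approached but never attained, which is precisely the content of ``asymptotically attained as $\alpha\searrow r^*$.'' I would therefore present the limit explicitly rather than differentiate and set to zero, and I would flag that the PoA here is a supremum, not a maximum.
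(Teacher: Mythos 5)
Your proposal is correct and follows essentially the same route as the paper: direct substitution of the equilibrium profits, reduction of the inner ratio to $\frac{\(n+1\)^2}{n}\(n+\alpha/r^*\)^{-1}$, and monotonicity in $\alpha/r^*$ to conclude that the supremum $1+1/n$ is attained asymptotically as $\alpha\searrow r^*$, independently of $F$. Your added remark that the bound is a supremum rather than a maximum is a correct and slightly more careful reading of what the paper compresses into ``asymptotically attained.''
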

\begin{proof} By a direct substitution in the definition of PoA, the inner term equals $\frac{\(n+1\)^2}{n}\cdot\(n+\frac{\alpha}{r^*}\)^{-1}$. Since $\(n+\frac{\alpha}{r^*}\)^{-1}$ decreases in the ratio $\alpha/r^*$, the inner $\sup$ is attained asymptotically for $\alpha \searrow r^*$. Hence, 
\begin{equation}\label{poacalc}\text{PoA}=\sup_{F\in \mathcal G}\left\{\frac{\(n+1\)^2}{n}\cdot \(n+1\)^{-1}\right\}=1+\frac1n\end{equation}\end{proof}

\begin{wrapfigure}[18]{r}{0.49\textwidth}
\vspace*{-0.8cm}\includegraphics[width=\linewidth]{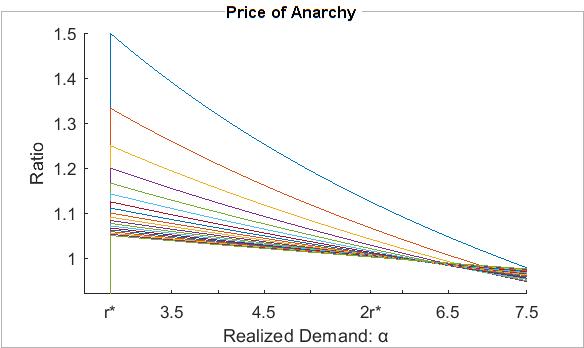}
\caption{\label{poan}Ratio of the integrated firm's to the decentralized market's aggregate profits for $n=2,\dots, 20$ with $\alpha\sim \text{Gamma}\(2,2\)$. For each $n$ the realized PoA is attained as $\alpha \searrow r^*\approx 2.83$. For $n\(\alpha-2r^*\)\le\alpha$, the curves are nonincreasing in $n$, which results in the nonlinearity (with respect to $n$) for values of $\alpha$ in $[2r^*, 3r^*]$.}
\end{wrapfigure}

\Cref{poathm} implies that the market becomes less efficient in the worst-case scenario, i.e., for a realized demand $\alpha \searrow r^*$, as the number of downstream retailers increases. In general, as can be directly inferred by partial differentiation with respect to $n$, for realized values of $\alpha<2r^*$, the inner term of the $\sup$ expression in \eqref{poacalc} is decreasing in $n$. For realized values of $\alpha\ge 2r^*$, the ratio is increasing in $n$ when $n\ge\alpha/\(\alpha-2r^*\)$ and decreasing in $n$ otherwise. These findings are shown graphically in \Cref{poan}.\par
Finally, a similar calcuation yields that the PoA of the deterministic market is equal to $1+\mathcal O\(n^{-2}\)$. The realized demand $\alpha$ simplifies in the inner ratio and hence this upper bound is constant and independent of the demand level. Notably, the PoA in the deterministic market is equal to PoU in the stochastic market, cf. \Cref{thmpou}.

\section{Conclusions}\label{conclusions}
The present study complements the findings of \cite{Le17}, by focusing to the effects of demand uncertainty on market efficiency.\footnote{Along with \cite{Kok18}, the current paper presents preliminary works that appear in full length in \cite{Le18,Leo20,Leo21,Leon21}.} Based on the realized market profits, we measured the effects of uncertainty via the reali\-zed Price of Uncertainty. Counterintuitively, there exist demand levels for which the retailers' and the market's aggregate profits are higher when the supplier prices under demand uncertainty. This is achieved in expense of the supplier's welfare who is always better off under deterministic demand. The realized Price of Anarchy revealed that for any demand level, the integrated chain performs better -- in terms of efficiency -- as the number of competing retailers increases. Upper bounds of inefficiency are attained for lower values of realized demand. Despite these intuitions, the present analysis is limited in extent. Price differentiation and mechanisms that will incentivize retailers to honestly reveal their private information about demand, constitute promising lines of ongoing research.

\bibliographystyle{splncs}

\end{document}